\documentclass[%
pra,
 reprint,
 amsmath,amssymb,
 aps,
]{revtex4-2}

\usepackage{latexsym}
\usepackage{acronym}
\usepackage{graphicx}
\usepackage{physics}
\usepackage{amsmath}
\usepackage{amsfonts}
\usepackage{amsthm}
\usepackage{siunitx}
\usepackage{bbm}
\usepackage{bm}
\usepackage[title]{appendix}

\usepackage{tikz}
\usetikzlibrary{arrows,automata}
\usepackage{caption, subcaption}

\newtheorem{theorem}{Theorem}
\newtheorem{lemma}{Lemma}
\newcommand{\hilb}[0]{%
  \mathcal{H}%
}

\newcommand{\tensp}[1]{%
  \mathbin{\mathop{\otimes}\displaylimits_{#1}}%
}

\newacro{POVM}{\textit{positive operator-valued measurement}}

\bibliographystyle{abbrvnat}

\begin{document}


\title{Quantum Walks can Unitarily Represent Random Walks on Finite Graphs}


\author{Matheus Guedes de Andrade}
\email{mguedesdeand@umass.edu}
\affiliation{%
    Manning College of Information and Computer Science \\
    University of Massachusetts Amherst
}%

\author{Franklin Marquezino}
\email{franklin@cos.ufrj.br}
\author{Daniel Ratton}%
 \email{daniel@cos.ufrj.br}
\affiliation{%
    Department of Computer and Systems Engineering - COPPE\\
    Federal University of Rio de Janeiro
}%

\date{\today}

\begin{abstract}
Quantum and random walks have been shown to be equivalent in the following sense: a time-dependent random walk can be constructed such that its vertex distribution at all time instants is identical to the vertex distribution of any discrete-time coined quantum walk on a finite graph. This equivalence establishes a deep connection between the two processes, far stronger than simply considering quantum walks as quantum analogues of classical random walks. The present work strengthens this connection by providing a construction that establishes this equivalence in the reverse direction: a unitary time-dependent quantum walk can be constructed such that its vertex distribution is identical to the vertex distribution of any random walk on a finite graph at all time instants. The construction shown here describes a quantum walk that matches a random walk without measurements at all time steps (an otherwise trivial statement): measurement is performed in a quantum walk that evolved unitarily until a given time $t$ such that its vertex distribution is identical to the random walk at time $t$. The construction procedure is general, covering both homogeneous and non-homogeneous random walks. For homogeneous random walks, unitary evolution implies time dependency for the quantum walk, since homogeneous quantum walks do not converge under arbitrary initial conditions, while a broad class of random walks does. Thus, the absence of convergence demonstrated for quantum walks in its debut comes from both time-homogeneity and unitarity, rather than unitarity alone, and our results shed light on the power of quantum walks to generate samples for arbitrary probability distributions. Finally, the construction here proposed is used to simulate quantum walks that match uniform random walks on the cycle and the torus.
\end{abstract}

\maketitle

\section{Introduction}

Quantum walks on graphs are one of the few known design techniques for quantum algorithms. Indeed, they have been applied to build quantum algorithms with significant speedups in applications such as searching~\cite{childs2004spatial, meyer2015connectivity, magniez2011search}, Monte Carlo methods~\cite{montanaro2015quantum}, claw finding in graphs~\cite{tani2009claw} and backtracking algorithms~\cite{montanaro2015bakctrack}. Albeit their simple description, quantum walks are extremely powerful since they are universal for quantum computing \cite{childs2009universal, childs2013universal, lovett2010universal}. Intuitively, quantum walks on graphs extend the definition of classical random walks to unitary processes on a Hilbert space that represents graph structure.

Any discrete time coined quantum walk induces a sequence of probability distributions among the vertices of the graph over time and its connection to random walks has been investigated for the infinite integer line~\cite{montero2017quantum}. It has been demonstrated that both quantum and random walks can generate any probability distribution sequence on the integers that respect locality. More precisely, both walks can generate any sequence that assigns, at time $t$, non-zero probability to integers that are either the same, successors, or predecessors of the integers assigned non-zero probability at time $t - 1$.
These connections have also been investigated in the context of continuous-time chiral quantum walks, with a focus on regular structures~\cite{mulken2011continuous, frigerio2021generalized}. The graph random walk Laplacian provided properties that Hamiltonians must exhibit for a quantum system to serve as a quantum counterpart of a classical random walk~\cite{frigerio2021generalized}. It was shown that, for a given random walk dynamics, there are infinitely many compatible Hamiltonian matrices~\cite{frigerio2021generalized}.

While there are distinct ways of defining quantum walks each with different implications, this work considers discrete-time coined quantum walks on finite graphs, a quantum walk model that has been widely adopted~\cite{aharonov1993quantum, konno2018partition, portugal2016staggered, szegedy2004walks}.
Recent work has shown that, regardless of the disparities between quantum and random walks, it is possible to construct a time-dependent random walk on the same graph such that its sequence of vertex probability distribution is identical to that of any given discrete-time coined quantum walk~\cite{andrade2020equivalence}.
When the opposite direction is considered, it is a trivial result that measuring a quantum walk at each time instant recovers the behavior of a random walk. If a quantum walk is measured at time $t$, undergoes one step of unitary evolution, and is measured once again at time $t + 1$, its measurement probabilities evolve following the action of a stochastic matrix. This behavior comes from the destruction of quantum interference caused by the collapse of the wavefunction after the measurement \cite{portugal2013walk}. As an example, it is well known that the Hadamard quantum walk on a cycle collapses to the classic uniform random walk on the same cycle (i.e., a random walk that moves left or right with probability $1/2$), if the walker state is measured at every instant.

Nonetheless, can quantum walk operators be constructed to match classic random walks such that the walker system does not have to be measured at every step? This would require to embed the sequence of vertex probabilities of the classic random walk into the the unitary evolution of a quantum walk. Establishing the equivalence between the coherent evolution of quantum walks and the local evolution of random walks contributes to indicate a fundamental connection between the two processes, which was initially demonstrated in \cite{andrade2020equivalence}. 

In this context, the main contribution of the present work is to describe a rigorous procedure to construct a time-dependent unitary discrete-time coined quantum walk that when measured at time $t$ has the same vertex probability distribution of a classic random walk on the same graph at time $t$. Quantum walks with time-dependent coins exhibit diverse vertex probability distribution sequences based on coin dynamics~\cite{panahiyan2018controlling}, and we use time-dependency to reconstruct the stochastic behavior of the random walk with unitary quantum walks. To illustrate, the procedure here proposed is used to build quantum walks that have identical vertex probability sequences as classic random walks on the cycle and torus graphs. Numerical experiments using these quantum walks and comparison to random walks support the theoretical contributions. 

Note that the theoretical contribution of this work establishes that the divergence between quantum and random walks are not caused by unitarity alone, but are also dependent on time-homogeneity of both processes \cite{aharonov2001quantum}. When time-dependent coins are allowed to be used, the vertex probability distribution of the quantum walk can converge to a stationary distribution, mimicking the temporal dynamics of classic random walks. 

The remainder of this article is structured as follows. In Section \ref{sec:background}, a description of quantum and random walks is given. The algorithm that constructs a quantum walk matching any given random walk, which is the main contribution of this article, is demonstrated in Section \ref{sec:equivalence}. The results of simulation experiments using quantum walks to recover uniform random walks in the cycle and in the torus appear in Section \ref{sec:simulation}. The article is concluded in Section \ref{sec:conclusion}.

\section{Background}\label{sec:background}

Throughout this work, we consider a graph $G = (V, E)$ to be the directed version of an undirected graph, such that, for an edge $(v, u) \in E$, $(u, v) \in E$. In addition, we refer to the set of neighbors of a vertex $v$ as $N(v)$ and the degree of $v$ as $d(v) = |N(v)|$. Essentially, the notation adopted follows the one defined in \cite{andrade2020equivalence}.

It is possible to define a random walk on a graph $G$ as a probability distribution over the vertices of the graph that varies on time depending on edge connectivity, codifying vertex position as a random variable. Consider a probability vector $\pi(t) \in \mathbb{R}_{+}^{|V|}$, such that $\pi_v(t)$ denotes the probability of such random variable to be vertex $v$ at time $t$. A random walk is any process described by the equation
\begin{align}
    & \pi(t + 1) = P(t)\pi(t), \label{eq:rw}
\end{align}
where $P(t)$ is a stochastic matrix containing transition probabilities between vertices that respects the adjacency matrix of $G$, i.e., the entry $p_{vu}(t)$ is the probability to move from $u$ to $v$ at instant $t$ and $p_{vu} > 0$ if, and only if $(u, v) \in E$. Systems described by Equation \ref{eq:rw} are said to be Markovian since the probabilities at a given instant $t$ are completely determined by the probabilities at $t - 1$. Precisely, the evolution of probability for a particular vertex $v$ at instant $t$ in a random walk is a linear combination of the probabilities for the neighbors of $v$ at $t - 1$ following
\begin{align}
    & \pi_v(t) = \sum_{u \in N(v)} p_{vu}(t - 1)\pi_u(t - 1) \label{eq:rw_locality}, \\
    & \sum_{v \in N(u)} p_{vu}(t) = 1 \text{, for every } u \in V,
\end{align}
what implies that $P(t)$ is column stochastic, i.e., its columns sum to one for every instant $t$.

A discrete-time coined quantum walk on a graph $G$ defines the evolution of a unit vector in a Hilbert space $\hilb_w$ that codifies the edges of $G$ \cite{portugal2016coined}. Let $\hilb_v$ and $\hilb_c$ denote Hilbert spaces with dimension $|V|$ and $d_{max} = \max_v d(v)$, respectively. The space $\hilb_w \subseteq \hilb_v \tensp{} \hilb_c$ is spanned by unit vectors $\ket{u, c}$, where $u \in V$ and $c \in \{0, \ldots, d(u) - 1\}$, that can be mapped to edges of the graph through a function $\eta: V \cross C \to V$, with $C = \{0, \ldots, d_{max} - 1\}$. The space $\hilb_v$ is the vertex space of the walker system, codifying elements of $V$ as basis states, while $\hilb_c$ is the coin space of the walker, codifying the degrees of freedom for the walker's movement. Essentially, the wavefunction of the quantum walker at a given time step is a superposition of edges of the graph, having form given by the equation
\begin{align}
    & \ket{\Psi(t)} = \sum_{u \in V, c \in C_v} \psi(u, c, t) \ket{u, c},
\end{align}
where $C_u = \{0,..., d(u) - 1\}$ is the set of degrees of freedom of vertex $u$.

The evolution of the walker state at a discrete time instant $t$ is performed by two time-dependent unitary operators $S(t) : \hilb_w \rightarrow \hilb_w$ and $W(t) : \hilb_w \rightarrow \hilb_w$ on the system state vector as
\begin{align}
    & \ket{\Psi(t + 1)} = S(t)W(t) \ket{\Psi(t)}.\label{eq:qwalk}
\end{align}
$S(t)$ is known as the shift operator and performs a permutation between the edges of the graph that is only allowed to map a given edge $(u, v)$ to an edge $(v, w)$. $W(t)$ is named the coin operator, acting exclusively on $\hilb_C$ by mixing the wavefunction incident to a given edge $(u, v)$ to edges $(u, w)$. Formally, a generic coin operator is defined as 
\begin{align}
    & W(t) = \sum_{v \in V}\dyad{v}{v} \tensp{} W_v(t), \label{eq:coin_op}
\end{align}
where $W_v:\hilb_{C_{v}} \to \hilb_{C_{v}}$ is a unitary operator, where $\hilb_{C_{v}} \subseteq \hilb_C$ is the Hilbert space codifying the degrees of freedom of $v$. In order to define a generic shift operator precisely, it is necessary to define the function $\eta$ used to map edges to states, as well as two auxiliary functions $\sigma: V \cross V \rightarrow C$ and $\sigma^{-1}: V \cross V \rightarrow C$ that respectively map an inward edge of a vertex to an outward edge of vertex and an outward edge of a vertex to its inward correspondent, that are all depicted in Figure~\ref{fig:shift_func}. Thus, consider the meaning of $\eta (v, c) = u$ to be that $u$ is the $c$-th neighbor of $v$; $\sigma(u, v) = c$ meaning that edge $(u, v)$ is mapped to the state $\ket{v, c}$; and that $\sigma^{-1}{(u, v)} = c$ meaning that state $\ket{w, c}$ is mapped to $(u, v)$, whichever vertex $w \in N(u)$ satisfies $\eta(w, u) = c$. Then, the action of the shift operator is expressed as 
\begin{align}
    & \ket{v,c} \rightarrow \ket{\eta(v,c), \sigma(v, \eta(v,c))}.\label{eq:shift_def}
\end{align}

The probability of finding the quantum walker system at a particular state in a given time instant $t$ is characterized by projective measurements of the walker's state. The set of projectors $\{\dyad{v, c}\}$ yields the edge probability distribution
\begin{align}
    & \rho(u, c, t) = \abs{\psi(u, c, t)}^2. \label{eq:rho}
\end{align}
The probability distribution for vertices arises from a set of \ac{POVM} elements $\{E_u\}$ described as
\begin{align}
    & E_u = \sum_{c = 0}^{d_{max} - 1} \dyad{u, c},\label{eq:povms}
\end{align}
which defines the probability of finding the walker in a particular vertex to be
\begin{align}
    & \mu(u, t) = \sum_{c \in C_u} \rho(u, c, t).\label{eq:v_prob}
\end{align}

\begin{figure}
    \centering
    \begin{tikzpicture}[->,>=stealth',shorten >=1pt,auto,node distance=3.8cm, semithick]
        \tikzstyle{every state}=[fill=white, text=black]
        
        \node[state] (A) {$u$};
        \node[state] (B) [left of=A] {$v$};
        \node[state] (C) [right of=A] {$u'$};
        
        \path (A) edge [bend left] node {} (B) 
                edge [bend left] node {$\sigma(v, u)$} (C)
            (B) edge [bend left] node {$\eta(v, c) = \sigma^{-1}(u, u')$} (A)
            (C) edge [bend left] node {} (A); 
    \end{tikzpicture}
    \caption{Visual depiction of auxiliary functions $\eta$ and $\sigma$. $\eta$ gives an ordering for the neighbors of $v$ such that, in this case, $\eta(v, c) = u$. $\sigma$ maps the state $\ket{v, c}$ (edge $(v, u)$) to the state $\ket{u, \sigma(v, u)}$ (edge $(u, u')$). The inverse association $\sigma^{-1}$ connects the state $\ket{u,\sigma(v, u)}$ (edge $(u, u')$) with state $\ket{v, c}$ (edge $(v, u)$).}
    \label{fig:shift_func}
\end{figure}

\section{Quantum walks as non-homogeneous random walks}\label{sec:equivalence}



The unitarity of the quantum walk operators was central to guide the construction of the time-dependent random walk matching a given quantum walk \cite{andrade2020equivalence}. Such property also has fundamental implications that are to be explored in order to design quantum walks capable of coherently matching random walks. Quantum walks evolve with successive applications of unitary operators and lie in a complex unit-radius hyper-sphere in $\hilb_W$, with dimension $|E|$. The vectors describing random walks lie on the positive simplex of dimension $|V|$, which is closed under applications of the stochastic matrices that define the random walk evolution. Thus, the task at hand is two-folded: to represent the probability vectors of random walks as state vectors of $\hilb_W$ by creating a map between $\mathbb{R}^{|V|}$ and $\hilb_W$; and to map the application of a generic stochastic matrix $P(t):\mathbb{R}^{|V|} \to \mathbb{R}^ {|V|}$ to a unitary operator $S(t)W(t): \hilb_W \to \hilb_W$.

Since $\hilb_W$ is a complex space and $|E| \geq |V| - 1$, there are infinitely many mappings that serve the task at hand. Precisely, let $\pi : \mathbb{N} \to [0, 1]^{|V|}$ denote the probability vector of a random walk, at instant $t$, as defined in Section \ref{sec:background}. Any state vector $\ket{\Psi(t)}$ for which
\begin{align}
    \sum_{c \in C_v} \abs{\braket{v, c}{\Psi(t)}}^{2} = \pi_v(t) \label{eq:norm_cond}
\end{align}
is a proper quantum state that represents $\pi(t)$, and thus mimic the evolution of the random walk. The infinitely many possibilities to choose a state for which \eqref{eq:norm_cond} holds entail that the representation of the random walk state is a matter of choice.

Furthermore, it is intuitive that there exists a time-dependent unitary operator $Q(t): \hilb_W \to \hilb_W$ capable of matching the time evolution of states respecting Equation \ref{eq:norm_cond} such that
\begin{align}
    & \ket{\Psi(t+1)} = Q(t)\ket{\Psi(t)}, \label{eq:evo_q}
\end{align}
for every instant $t$. Unitary operators are norm preserving and any unit vector in a Hilbert space can be mapped to any other unit vector in the same space through the application of a unitary operator. Since all state vectors compliant with \eqref{eq:norm_cond} are unitary, there has to exist at least one unitary transformation $Q(t)$ satisfying \eqref{eq:evo_q}. Following this argument, it is necessary to show that there exists an operator satisfying
\begin{align}
    & Q(t) = S(t)W(t),
\end{align}
for every instant $t$, where $S$ and $W$ respect the graph under consideration.

\subsection{Representation of random walks as quantum walker systems}

Initially, it is necessary to define $\ket{\Psi(t)}$. A good representation for the system state is one that simplifies the search for the operators $S(t)$ and $W(t)$. To give intuition on the state representation chosen, consider that the system state is
\begin{align}
    & \ket{\Psi(t)} = \sum_{u \in V, c \in C_u} g(u, c, t) e^{i\theta(u,c,t)} \sqrt{\pi_u(t)} \ket{u,c} \label{eq:rw_state},
\end{align}
where $\theta(u, c, t)$ is an arbitrary complex phase and $g$ respects
\begin{align}
    & \sum_{c \in C_u} g(u, c, t)^2 = 1 \text{ for every } u \in V. \label{eq:g_condition}
\end{align}
The compliance of the equations above with Equation \ref{eq:norm_cond} is a direct consequence of the definition of $\mu$ (Equation \ref{eq:v_prob}). Equation \ref{eq:rw_state} implies that the value of $\rho(u, c, t) = g(u, c, t)^2\pi_u(t)$, while Equation \ref{eq:g_condition} assures that the sum over all degrees of freedom of $u$ yields $\mu(u, t) = \pi_u(t)$.


The diffusion behavior of the quantum walk leads to the definition of $g$ and $\theta$. Note that $W$ acts by mixing the wavefunction among the edges of a vertex and $S$ creates its flow. Consider a particular instant $t$ of the quantum walk with $W(t) = I$ (the identity matrix), such that the mixing behavior is ``turned off" for $t$ and $Q(t)=S(t)$ acts only by creating the flow of the wavefunction. Assuming that $S(t)$ is any valid shift operator and $\eta(u, c) = v$, an inspection of Equation \ref{eq:rw_locality} indicates that a natural choice for the function $g(u, c, t)$ is 
\begin{align}
    & g(u,c,t) = \sqrt{p_{vu}(t)}. \label{eq:g_func}
\end{align}
This choice satisfies Equation \ref{eq:g_condition}, while simultaneously implying that $\mu(u, t) = \pi_u(t)$ and $\mu(u, t+1) = \pi_u(t + 1)$ for all $u \in V$. The first two properties stem directly from the law of total probability. The condition for $t + 1$ comes from the fact that, regardless of the functions $\eta$ and $\sigma$ chosen to define $S(t)$, all states $\ket{v, c}$ have an incident wavefunction that yields a proper proportion of the probabilities of the neighbors of $v$ for every vertex $v \in V$ at instant $t$, such that
\begin{align}
    & \mu(v, t + 1) = \sum_{u \in N^{-}(v)} p_{vu}(t) \pi_u(t).
\end{align}

The given representation is powerful because it is a valid unitary representation that describes the operator $Q(t)$ as a proper quantum walk operator in the particular scenario considered. This representation gives the intuition for a state representation compatible with the process for every $t$. If the result of $W(t)\ket{\Psi(t)}$ is given by the right-hand side of Equation \ref{eq:rw_state} for all instant $t$, $Q(t)$ is properly decomposed into $S(t)W(t)$ for any valid shift operator.

The key aspect is to define the state of the system based on the probabilities of instant $t - 1$, instead of using the probabilities of instant $t$. It is known from the random walk description that the probability of a vertex at instant $t$ is a linear combination of the probability of its neighbors at instant $t - 1$. Thus, the following Lemma inspired by Equation \ref{eq:rw_locality} formalizes the state representation of choice.

\begin{lemma}[Quantum representation of random walks]\label{l:qrw}
Let $P(t)$ be a stochastic matrix that defines a random walk on $G$ such that $\pi(t + 1) = P(t)\pi(t)$. Let $\sigma^{-1}:V \cross V \to C$ be any function that associates an inward edge of a vertex to one of its outward edges defining a valid shift operator for $G$. For $t > 0$, The probability vector $\pi(t)$ can be represented by a discrete-time coined quantum walk state 
\begin{align}
    & \ket{\Psi(t + 1)} = \sum_{v \in V, c \in C_v} e^{i\theta(u, \sigma^{-1}(u, v),t)} \sqrt{p_{vu}(t)\pi_u(t)} \ket{v,c}.
\end{align}
defined on $G$, such that $\theta(u, c, t)$ is an arbitrary complex phase, for all $u \in V, c \in C_u$, $t \in \{0, 1, \ldots\}$, and with
\begin{align}
    & \ket{\Psi(0)} = \sum_{v \in V, c \in C_v} \sqrt{\frac{\pi_v(0)}{d(v)}} \ket{v, c}.
\end{align}
\end{lemma}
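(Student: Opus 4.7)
The plan is simply to verify that the state $\ket{\Psi(t)}$ proposed in the lemma satisfies the representation condition (\ref{eq:norm_cond}), namely $\sum_{c \in C_v}\abs{\braket{v,c}{\Psi(t)}}^{2} = \pi_v(t)$ for every $v \in V$. Since Equation (\ref{eq:norm_cond}) is exactly the criterion the paper adopts for ``$\ket{\Psi(t)}$ represents $\pi(t)$,'' checking this for all $v$ (and therefore that $\ket{\Psi(t)}$ is a unit vector) is all that is required. I would split the verification into the base case $t=0$ and the inductive case $t\geq 1$.

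For $t = 0$, the amplitude on $\ket{v,c}$ is $\sqrt{\pi_v(0)/d(v)}$, so $\abs{\braket{v,c}{\Psi(0)}}^{2} = \pi_v(0)/d(v)$; summing over the $d(v)$ values of $c \in C_v$ yields $\pi_v(0)$, and summing further over $v$ gives $1$, so the state is properly normalized and represents $\pi(0)$.

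For $t \geq 1$, I would read off the amplitude on $\ket{v,c}$ as $e^{i\theta(u,\sigma^{-1}(u,v),t-1)}\sqrt{p_{vu}(t-1)\,\pi_u(t-1)}$, where the implicit neighbor $u$ of $v$ is the one determined by the relation $\sigma^{-1}(u,v)=c$. The key observation is that, because $\sigma^{-1}$ encodes a valid shift permutation on the edges of $G$, the assignment $c \mapsto u(v,c)$ is a bijection between $C_v$ and $N(v)$: at each target vertex $v$, the $d(v)$ coin values label the $d(v)$ inward neighbors exactly once. The complex phases drop out under $\abs{\cdot}^2$, so
\begin{align}
    \sum_{c \in C_v}\abs{\braket{v,c}{\Psi(t)}}^{2}
    = \sum_{u \in N(v)} p_{vu}(t-1)\,\pi_u(t-1)
    = \pi_v(t),
\end{align}
where the last equality is Equation (\ref{eq:rw_locality}).

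The only non-routine step is recognizing and justifying the bijection $c \leftrightarrow u$ at each vertex, which is exactly what makes the ``shifted-back'' indexing via $\sigma^{-1}$ the right choice: it ensures that summing $|\text{amplitude}|^2$ over the coin register at $v$ reproduces the stochastic update formula. Once this bijection is in hand, no further machinery is needed; the phase freedom $\theta$ and the choice of $\sigma^{-1}$ play no role in this verification and are kept available precisely because they will be used later when constructing the unitary $Q(t)=S(t)W(t)$.
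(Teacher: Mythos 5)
Your proposal is correct and follows essentially the same route as the paper's proof: read off $\abs{\braket{v,c}{\Psi(t)}}^2 = p_{vu}(t-1)\pi_u(t-1)$, sum over the coin register at each vertex to recover Equation \ref{eq:rw_locality}, and conclude both $\mu(v,t)=\pi_v(t)$ and normalization. You are somewhat more explicit than the paper in justifying the bijection between $C_v$ and $N(v)$ induced by the shift structure (the paper passes silently from the sum over $c$ to the sum over $u \in N^{-}(v)$), but this is a refinement of the same argument, not a different one.
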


\begin{proof}
The proof for $t = 0$ is trivial. Hence, it suffices to show that, for $t >0$, $\ket{\Psi(t)}$ is a unitary vector and that $\mu(v, t) = \pi_v(t)$, for all $v \in V$. The measurement of state $\ket{v, c}$ yields that
\begin{align}
    & \rho(v, c, t + 1) = \abs{e^{i\theta(u, \sigma^{-1}(u, c), t)} \sqrt{p_{vu}(t)\pi_u(t)}}^2.    
\end{align}
Since $p_{vu}(t)$ and $\pi_u(t)$ are positive reals, $\rho(v, c, t) = p_{vu}(t)\pi_u(t)$. The definition of $\mu(v, t)$ gives
\begin{align}
    & \mu(v, t + 1) = \sum_{u \in N^{-}(v)}p_{vu}(t)\pi_u(t)
\end{align}
Equation \ref{eq:rw_locality} implies that $\mu(v, t) = \pi_v(t)$, for all $v \in V$. Since $\norm{\ket{\Psi(t)}} = \sum_{v \in V}\mu(v, t)$, $\ket{\Psi(t)}$ is clearly unitary and the claim is proved.
\end{proof}

\subsection{Complete description of time-evolution}


From Lemma \ref{l:qrw}, any valid shift operator for $G$ can be used to determine the edge maps $\sigma$ and $\sigma^{-1}$ to represent the random walk state. In order to simplify both the analysis and the notation used, consider the following shift operator $S_{RW}$, defined in terms of its auxiliary functions (see Section \ref{sec:background}). Let $\eta:V \cross C \to V$ be defined such that the $c\text{-th}$ neighbor of $v$ is the neighbor of $v$ with the $c\text{-th}$ smallest label. Formally, for all $u\in V$, all $c,c' \in C_u$, $c \neq c'$, it holds
$$\eta(u, c) < \eta(u, c') \iff c < c'.$$
Note that, for each $(u, v) \in E$, there exists a pair $c \in C_u$, $c' \in C_v$ such that $\eta(u, c) = v$ and $\eta(v, c') = u$. Thus, let $\sigma(u, v) = c'$ and $\sigma(v, u) = c$. Furthermore, let $\sigma^{-1}(u,v) = \sigma(v ,u)$ and $\sigma^{-1}(v,u) = \sigma(u, v)$. Precisely, $S_{RW}$ is the flip-flop shift operator that maps $(u, v)$ to $(v, u)$ and is well-defined for any graph $G$ of interest. The definitions of $\eta$ and $\sigma$ for $S_{RW}$ will be used throughout this section.

The $S_{RW}$ operator yields describing the state of a vertex $v$ at time $t$ as the vector
\begin{align}
    & \ket{\Psi(v, t + 1)} = \sum_{u \in N^{+}} e^{i\theta(u, c,t)} \sqrt{p_{vu}(t)\pi_u(t)}\ket{v, c'}\label{eq:psi},
\end{align}
where the dependency of $c$ and $c'$ on $u$ and $v$ is omitted, i.e $\eta(u, c) = v$ and $\eta(v, c') = u$. From the analysis of the state representation on the previous section, it is enough to ensure that, for every instant $t$, the action of $W(t)$ maps
\begin{align}
    & \ket{\Psi(v, t)} \to \ket{\Phi(v, t)},
\end{align}
where
\begin{align}
    & \ket{\Phi(v, t)} = \sum_{c' \in C_v} e^{i\theta(v,c',t)} \sqrt{p_{uv}(t) \pi_v(t)} \ket{v, c'} \label{eq:phi}.
\end{align}
The definition of the coin operator (Equation $\ref{eq:coin_op}$) implies that each vertex $v$ has its own independent unitary mixing behavior $W_v$. In addition, it is a well known result from linear algebra that any operator that changes orthonormal basis is unitary. Thus, the following Lemmas respectively provide formal constructions for a set of linearly independent vectors on the coin subspace of a vertex and the coin operator $W(t)$ itself.

\begin{lemma}[Linear independent set construction]\label{l:li}
Let $\hilb_{w, d(v)} \subset \hilb_W$ be the subspace that represents the coin space $\hilb_{d(v)}$ of a vertex $v$. Let $\ket{a} \in \hilb_{w, d(v)}$ be any vector in the subspace. Let $\beta$ be the basis of edges $\{\ket{u,c}\}$ for $\hilb_{w, d(v)}$. The set
\begin{align}
A = \{\ket{a}\} \cup \zeta(a, v) \cup B\}
\end{align}
is linearly independent, where $\zeta(a, v) = \{\ket{v, c} : \braket{a}{v, c} = 0\}$ and $B \subset \beta \setminus \zeta(a, v)$ is any subset of $\beta \setminus \zeta(a, v)$ with cardinality $|B| = |\beta \setminus \zeta(a, v)| - 1$.
\end{lemma}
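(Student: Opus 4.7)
The plan is a direct linear algebra argument based on extracting one distinguished basis vector. First, I would note the cardinality count: $|A| = 1 + |\zeta(a,v)| + (|\beta \setminus \zeta(a,v)| - 1) = |\beta|$, so $A$ has exactly the right size to be a basis of $\mathcal{H}_{w,d(v)}$, and demonstrating linear independence therefore suffices. I would also record the implicit assumption that $\ket{a} \neq 0$; otherwise $\zeta(a,v) = \beta$ and $B$ would have to have cardinality $-1$.

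Next, I would identify the single basis vector that is conspicuously left out of $B$. Because $B$ has cardinality $|\beta \setminus \zeta(a,v)| - 1$, there is a unique vector $\ket{v, c^{*}} \in \beta \setminus \zeta(a,v)$ with $\ket{v,c^{*}} \notin B$. By the definition of $\zeta(a,v)$, this vector satisfies $\braket{a}{v,c^{*}} \neq 0$, which is the crucial property exploited below.

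Now suppose a dependence relation holds:
\begin{align}
    \alpha \ket{a} + \sum_{\ket{v,c} \in \zeta(a,v)} \beta_{c}\ket{v,c} + \sum_{\ket{v,c} \in B} \gamma_{c}\ket{v,c} = 0.
\end{align}
I would take the inner product of both sides with $\ket{v,c^{*}}$. Since $\ket{v,c^{*}}$ is an element of the orthonormal edge basis $\beta$, it is orthogonal to every vector appearing in the two sums (those sums range over basis vectors different from $\ket{v,c^{*}}$). What survives is $\alpha\braket{v,c^{*}}{a} = 0$, and because $\braket{v,c^{*}}{a}$ is the conjugate of a nonzero complex number, $\alpha = 0$ follows.

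Once $\alpha = 0$, the remaining relation becomes a linear combination of distinct elements of the orthonormal basis $\beta$ equal to zero, which forces every $\beta_{c}$ and every $\gamma_{c}$ to vanish, completing the proof. I do not anticipate a real obstacle here; the only subtlety worth flagging explicitly is the justification that exactly one ``excluded'' basis vector exists with nonzero overlap with $\ket{a}$, which is precisely what the cardinality of $B$ is engineered to guarantee.
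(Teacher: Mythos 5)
Your proof is correct and rests on the same key observation as the paper's: exactly one basis vector $\ket{v,c^{*}}$ of $\beta$ is excluded from $B \cup \zeta(a,v)$, and it necessarily has nonzero overlap with $\ket{a}$. Your execution is actually tighter than the paper's: by writing an explicit dependence relation and pairing it with $\ket{v,c^{*}}$ you handle the cases $\zeta(a,v) = \emptyset$ and $\zeta(a,v) \neq \emptyset$ uniformly (the paper splits them), you correctly use $\braket{a}{v,c^{*}} \neq 0$ where the paper writes the unjustified $\braket{a}{v,c'} > 0$ for a complex inner product, and you flag the implicit assumption $\ket{a} \neq 0$, which the paper omits.
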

\begin{proof}
It is clear that all vectors from $A \setminus \{\ket{a}\}$ are orthogonal, since they are a subset of the basis $\beta$. In the case where $\zeta(a, v) = \{ \}$, $\braket{a}{v, c} \ne 0$ for all $\ket{v, c} \in \beta$, what implies that exists a $c'$ such that $\braket{a}{v, c'} > 0$ while $\braket{v, c}{v, c'} = 0$ for all $\ket{v, c} \in B$. Hence, it is impossible to write $\ket{a}$ as a linear combination of vectors in $B$, and $A$ is a set of linearly independent vectors.

In the case where $\zeta(a, v) \neq \{ \}$ the construction of $A$ implies that the condition of the existence of $c'$ holds because the vectors from $\zeta(a, v)$ are orthogonal to $\ket{a}$ and exactly one of the vectors of the set $\beta \setminus B$ is not a member of $B \cup \zeta(a, v)$.
\end{proof}

\begin{lemma}[Coin operators for random walks]\label{l:unit}
Let $\hilb_{d(v)}$ denote the Hilbert space defined by the degrees of freedom of a vertex $v$. Let the sets of vectors $\alpha$ and $\beta$ be two orthonormal basis for $\hilb_{d(v)}$, where $\alpha_k$ and $\beta_k$ are, respectively, the $k\text{-th}$ vectors of $\alpha$ and $\beta$. Let 
\begin{align}
    & \alpha_0 = \frac{1}{\sqrt{\braket{\Phi(v, t)}}} \ket{\Phi(v, t)}, \\
    & \beta_0 = \frac{1}{\sqrt{\braket{\Psi(v, t)}}} \ket{\Psi(v, t)},
\end{align}
with $\Psi(v, t)$ and $\Phi(v, t)$ given by Equations \ref{eq:psi} and \ref{eq:phi}, respectively. The operator
\begin{align}
    & W_v(t) = \sum_{k = 0}^{d(v) - 1} \dyad{\alpha_{k}}{\beta_{k}}, \label{eq:vertex_coin_op}
\end{align}
is unitary, inducing a unitary operator $W(t) = \sum_{v \in V} \dyad{v}{v} \bigotimes W_v(t)$ on $\hilb_w$.
\end{lemma}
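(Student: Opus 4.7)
The plan is to verify that $\alpha_0$ and $\beta_0$ are well-defined unit vectors in $\hilb_{d(v)}$, extend each to a full orthonormal basis by combining Lemma \ref{l:li} with Gram--Schmidt orthogonalization, and then establish unitarity of $W_v(t)$ through the standard computation for basis-change operators. The unitarity of $W(t)$ on $\hilb_w$ will then follow immediately from its block-diagonal structure over the mutually orthogonal coin subspaces indexed by $v \in V$.

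First I would compute the squared norms $\braket{\Psi(v,t)}{\Psi(v,t)}$ and $\braket{\Phi(v,t)}{\Phi(v,t)}$ to justify the normalizations in the definitions of $\alpha_0$ and $\beta_0$. For $\ket{\Psi(v,t)}$, the basis vectors $\ket{v,c'}$ are orthogonal and the phase factors cancel under the squared modulus, yielding $\sum_{u \in N^{+}(v)} p_{vu}(t-1)\pi_u(t-1)$, which equals $\pi_v(t)$ by Equation \ref{eq:rw_locality}. For $\ket{\Phi(v,t)}$, the squared norm expands to $\pi_v(t)\sum_{c' \in C_v} p_{uv}(t)$ where each $c'$ indexes a neighbor $u = \eta(v,c')$; the column-stochasticity of $P(t)$ collapses that sum to $1$, so the squared norm is again $\pi_v(t)$. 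Hence $\alpha_0$ and $\beta_0$ are indeed unit vectors.

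Next, to obtain bases containing these prescribed first vectors, I would apply Lemma \ref{l:li} with $\ket{a} = \alpha_0$ (and separately with $\ket{a} = \beta_0$) to produce $d(v)$ linearly independent vectors in $\hilb_{d(v)}$ whose first element is the target vector. Running Gram--Schmidt with $\alpha_0$ (respectively $\beta_0$) as the fixed first pivot then yields orthonormal bases $\alpha$ and $\beta$ with the required leading elements.

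Finally, unitarity of $W_v(t)$ follows from the direct computation
\begin{align}
W_v(t)^{\dagger} W_v(t) = \sum_{j,k} \ket{\beta_j}\braket{\alpha_j}{\alpha_k}\bra{\beta_k} = \sum_{k} \ket{\beta_k}\bra{\beta_k} = I,
\end{align}
where orthonormality of $\alpha$ reduces the inner product to $\delta_{jk}$ and completeness of $\beta$ collapses the remaining sum to the identity on $\hilb_{d(v)}$; the symmetric argument gives $W_v(t) W_v(t)^{\dagger} = I$. Because the projectors $\dyad{v}{v}$ act on mutually orthogonal subspaces of $\hilb_v$, the operator $W(t) = \sum_{v \in V} \dyad{v}{v} \otimes W_v(t)$ is block-diagonal on $\hilb_w$ with unitary blocks, and is therefore itself unitary. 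The step demanding the most care is the normalization computation for $\ket{\Phi(v,t)}$ and $\ket{\Psi(v,t)}$, where both the random-walk locality relation and the column-stochasticity of $P(t)$ must be invoked; the remaining algebra is routine.
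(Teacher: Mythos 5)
Your proposal is correct and follows essentially the same route as the paper's proof: the paper simply notes that $\braket{\Psi(v,t)}{\Psi(v,t)} = \braket{\Phi(v,t)}{\Phi(v,t)}$ and invokes the completeness relation for the two orthonormal bases, exactly the argument you carry out. You merely supply the details the paper leaves implicit --- the explicit evaluation of both norms as $\pi_v(t)$ via Equation \ref{eq:rw_locality} and column-stochasticity, and the basis construction via Lemma \ref{l:li} with Gram--Schmidt, which the paper defers to the proof of Theorem \ref{th:random_walk_unitarity}.
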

\begin{proof}
Note that $\braket{\Psi(u,t)} = \braket{\Phi(u,t)}$. It follows trivially from the completeness relation that
$$W_v^{\dagger}(t)W_v(t) = W_v(t)W_v^{\dagger}(t) = I,$$
and $W(t)$ is unitary.
\end{proof}

Finally, the following Theorem states that, for any given random walk, a statistically equivalent quantum walk in terms of vertex probabilities can be constructed assuming time-dependent coin operators.

\begin{theorem}\label{th:random_walk_unitarity}
Let $P(t)$ be a stochastic matrix that defines the evolution of a random walk on a graph $G$, such that, for all $t$, $\pi(t + 1) = P(t)\pi(t)$. For every instant $t$, the quantum walk with state $\ket{\Psi(t)}$ given by Lemma \ref{l:qrw}, with fixed shift operator $S(t) = S_{RW}$ and coin operator $W(t)$ given by Lemma $\ref{l:unit}$, evolves according to
\begin{align}
    & \ket{\Psi(t + 1)} = S_{RW}W(t)\ket{\Psi(t)},
\end{align}
such that $\mu(u, t) = \pi_u(t)$ and $\mu(u, t + 1) = \pi_u(t + 1)$ for all $u \in V$.
\end{theorem}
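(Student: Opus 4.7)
The plan is to verify the evolution equation $\ket{\Psi(t+1)} = S_{RW}W(t)\ket{\Psi(t)}$ directly; once this is in hand, the two probability claims $\mu(u,t) = \pi_u(t)$ and $\mu(u,t+1) = \pi_u(t+1)$ follow immediately from applying Lemma \ref{l:qrw} at instants $t$ and $t+1$. First I would decompose $\ket{\Psi(t)} = \sum_{v \in V} \ket{\Psi(v,t)}$ as in Equation \ref{eq:psi}; since $W(t) = \sum_v \dyad{v}{v} \otimes W_v(t)$ is block-diagonal in the vertex index, the computation reduces to analyzing $W_v(t)\ket{\Psi(v,t)}$ for each $v$ separately.

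For the coin step I would first check that $\|\ket{\Psi(v,t)}\|^{2} = \sum_{u \in N^{+}(v)} p_{vu}(t-1)\pi_u(t-1) = \pi_v(t)$ by Equation \ref{eq:rw_locality}, and symmetrically $\|\ket{\Phi(v,t)}\|^{2} = \pi_v(t)\sum_{u \in N(v)} p_{uv}(t) = \pi_v(t)$ by the column-stochasticity of $P(t)$. With both norms equal to $\pi_v(t)$, the normalized vectors $\beta_0 = \ket{\Psi(v,t)}/\sqrt{\pi_v(t)}$ and $\alpha_0 = \ket{\Phi(v,t)}/\sqrt{\pi_v(t)}$ used to define $W_v(t)$ in Lemma \ref{l:unit} yield $W_v(t)\ket{\Psi(v,t)} = \ket{\Phi(v,t)}$ at once, so $W(t)\ket{\Psi(t)} = \sum_v \ket{\Phi(v,t)}$.

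Next I would apply $S_{RW}$ term by term. The flip-flop shift sends $\ket{v,c'} \mapsto \ket{u, \sigma(v,u)}$ where $u = \eta(v,c')$; setting $c = \sigma(v,u) = \sigma^{-1}(u,v)$, the coefficient contributed at $\ket{u,c}$ is $e^{i\theta(v,c',t)}\sqrt{p_{uv}(t)\,\pi_v(t)}$. Re-indexing the double sum by the target pair $(u,c)$ — so that $v = \eta(u,c)$ and $c' = \sigma(u,v) = \sigma^{-1}(v,u)$ — the coefficient becomes $e^{i\theta(v,\sigma^{-1}(v,u),t)}\sqrt{p_{uv}(t)\,\pi_v(t)}$, which matches term-by-term the formula for $\ket{\Psi(t+1)}$ dictated by Lemma \ref{l:qrw} at time $t+1$.

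The main delicate point is the bookkeeping in this last step: the four roles of source vertex, target vertex, outgoing direction at the source, and incoming direction at the target must be tracked consistently across $\ket{\Psi(v,t)}$, $\ket{\Phi(v,t)}$, and the image under $S_{RW}$. The identity $\sigma^{-1}(u,v) = \sigma(v,u)$ baked into the flip-flop shift is exactly what turns the phase argument $\theta(v,c',t)$ coming out of the coin step into the $\theta(v,\sigma^{-1}(v,u),t)$ required at time $t+1$, and this is the only place where a misalignment could silently break the identification.
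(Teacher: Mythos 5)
Your proposal is correct and takes essentially the same route as the paper: represent the states via Lemma \ref{l:qrw}, use the basis-change coin of Lemma \ref{l:unit} to send $\ket{\Psi(v,t)}$ to $\ket{\Phi(v,t)}$ vertex by vertex, and let the flip-flop shift $S_{RW}$ redistribute the amplitudes. You are in fact more explicit than the paper's own proof, which merely asserts that the operators are unitary and well defined and leaves the norm identity $\norm{\ket{\Psi(v,t)}}^2 = \norm{\ket{\Phi(v,t)}}^2 = \pi_v(t)$ and the final re-indexing under $S_{RW}$ to the discussion preceding the lemmas; your verification of those steps (including the role of $\sigma^{-1}(u,v)=\sigma(v,u)$ in aligning the phases) fills in exactly what the paper glosses over.
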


\begin{proof}
For every $u \in V$, the conditions for $\mu(u, t) = \pi_u(t)$ and $\mu(u, t+1) = \pi_u(t + 1)$ are ensured by Lemma \ref{l:qrw}. As a valid shift operator, $S_{RW}$ is unitary. At instant $t$, construct two sets of linearly independent vectors $A_{v,1}$ and $A_{v, 2}$ by respectively applying Lemma \ref{l:li} to $\ket{\Phi(v, t)}$ and $\ket{\Psi(v,t)}$ for every vertex $v$. Use the Gram-Schmidt procedure on the sets $A_{1,v}$ and $A_{2, v}$ to generate the orthonormal basis $\alpha_{k,v}$ and $\beta_{k, v}$ respectively. Take $W(t)$ as the unitary operator defined by Lemma \ref{l:unit} using all basis $\alpha_{v, k}$ and $\beta_{v, k}$. Hence, $S_{RW}$ and $W(t)$ are unitary and well defined for every instant $t$ and the claim is proved.
\end{proof}

It is essential to note that the procedure to construct the quantum walk of Theorem \ref{th:random_walk_unitarity} is not unique. In addition to the infinite possibilities of representation that lead to distinct definitions for $S(t)$ and $W(t)$, the operator $W(t)$ can also be defined differently. In fact, the Gram-Schmidt procedure is just one convenient way to define $W(t)$. Nonetheless, there may exist alternative definitions that could be more efficient under specific conditions, such as particular random walk definitions and graphs.

Together with the procedure that construct random walks matching quantum walks \cite{andrade2020equivalence}, Theorem \ref{th:random_walk_unitarity} reveals that unitary discrete-time coined quantum walks and non-homogeneous random walks are intrinsically related. Knowing the time-dependent stochastic matrix $P(t)$ and the probability vector $\pi(t)$ allows for the construction of a quantum walk operator $S(t)W(t)$ and the state vector $\ket{\Psi(t)}$, and vice-versa.


\section{Convergent quantum walks on graphs}\label{sec:simulation}
Theorem \ref{th:random_walk_unitarity} considers general, non-homogeneous random walks. Therefore, the theorem can be used to construct statistically equivalent quantum walks for the particular case of time-homogeneous random walks. In this case, the convergence of the vertex probability vector $\pi(t)$ for arbitrary initial conditions is assured when the random walk is irreducible and aperiodic. Despite the absence of convergence for the wavefunction caused by unitarity~\cite{aharonov2001quantum}, the convergence of the vertex probability does not harm the construction of the equivalent quantum walk. To illustrate, consider a quantum walk where the wavefunction is permuted among the edges of a vertex perpetually, such that $\Psi(v, c, t) = \Psi(v, c', 0)$ for $t > 0$ and $c' \in C_v$. The vertex probability is the same for all $t$ while the wavefuction keeps alternating forever on the edges, and thus does not converge. Note also that a time-homogeneous stochastic matrix $P(t)=P$ does not implies on a time-independent coin operator $W(t)=W$. 

In this section, we apply Theorem \ref{th:random_walk_unitarity} to obtain quantum walks that replicate uniform random walks. Note that such quantum walks converge to the same stationary distribution of their random walks counterpart. We focus on uniform random walks on cycle and torus graphs, and numerically evaluate the quantum walks obtained from Theorem \ref{th:random_walk_unitarity} to show their convergence. It is worth emphasizing that the equivalence theorem provides a way to simulate random walks through the simulation of their correspondent quantum walks, a technique which is applied in the numerical analysis presented in this section.

Cycle and torus graphs are examples of regular graphs, i.e., graphs which all vertices have the same degree. The cycle graph has vertices with degree two, while the torus has vertices with degree four. The uniform random walk assigns equal transition probabilities to all edges departing from a graph's vertex. Thus, each outgoing edge of a vertex in the cycle and torus is traversed with probability 1/2 and 1/4 by an uniform random walk moving from that vertex, respectively. Uniform random walks on regular graphs have been well studied and there are known conditions for the convergence of probability distributions with time, i.e., irreducibility and aperiodicity, as well as descriptions for its stationary distributions~\cite{haggstrom2002finite}. In particular, if the Markov chain underlying the uniform random walk on a given regular graph is irreducible and aperiodic, the stationary distribution of the random walk is uniform in the vertex set. More precisely, let $M$ denote the column stochastic matrix driving the time evolution of a uniform random walk on an arbitrary regular graph of $n$ vertices. Under the irreducibility and aperiodicity of $M$, it follows that
\begin{align}
    \lim_{t \to \infty}M^{t} \pi(0) = \frac{\vec{1}}{n}, \label{eq:regular_stationary}
\end{align}
where $\vec{1} \in \mathbb{R}^{n}$ is a vector with all entries equal to one and $\pi(0)$ is an arbitrary initial distribution for the random walk. The stochastic matrices of uniform random walks on the torus are irreducible and aperiodic independent of the number of vertices in the graph. In the case of the cycle, the matrices are irreducible for all cases, although are only aperiodic for cycles with an odd number of vertices. 

Quantum walks that replicate the vertex distribution of homogeneous random walks on the cycle and torus converge to the stationary distribution, following~\eqref{eq:regular_stationary}. In particular, let $M_C$ and $M_T$ denote the column stochastic matrices that drive space-homogeneous random walks on the cycle and torus graph of $n$ vertices, respectively. Let $W_C(t)$ and $W_T(t)$ denote the quantum walk operators derived from $M_C$ and $M_T$ from Theorem \ref{th:random_walk_unitarity}, respectively, using an arbitrary initial condition $\pi(0) \in [0, 1]^{n}$ in both cases. $W_C(t)$ and $W_T(t)$ induce vertex probability distribution sequences $\mu_C$ and $\mu_T$ following \eqref{eq:v_prob} for which \eqref{eq:regular_stationary} yields
\begin{align}
    & \lim_{t \to \infty} \mu_C(v, t) = \lim_{t \to \infty} \mu_T(c, v) = \frac{1}{n}, \text{ for all } v.\label{eq:qwregular_stationary}
\end{align}

The probability $\mu(v, t)$ of measuring a quantum walk at instant $t$ in vertex $v$ defined in \eqref{eq:v_prob} is based on the assumption that the walk system undergoes unitary evolution until instant $t$ and its measured instantaneously at $t$. Let $\mathcal{T} = \{0, 1, \ldots, t\}$ denote a set of discrete time steps for which a given quantum walk is to be simulated for. Let $\hat{\mu}_k(v, t)$ denote the empirical vertex probability distribution obtained when $k$ quantum walk systems identically prepared in state $\ket{\psi(0)}$ are measured after $t$ steps of unitary evolution. A $k$-sample, $t$-step quantum walk simulation is, in the scope of this work, the point estimation of the vertex probability distributions $\mathcal{V}^{k} = \{\mathcal{V}_0^{k}, \mathcal{V}_1^{k}, \ldots, \mathcal{V}_t^{k},\}$, with $\mathcal{V}_{j}^{k}: V \to \mathbb{R}^{+}$ such that $\mathcal{V}_{j}^{k}(v) = \hat{\mu}_k(v, j)$ for all $v \in V$, $j \in \{0, 1, \ldots, t\}$. It follows that the number of quantum walk systems that must be prepared to estimate $\mathcal{V}^{k}$ is $tk$.

\begin{figure*}
    \centering
    \includegraphics[scale=0.45]{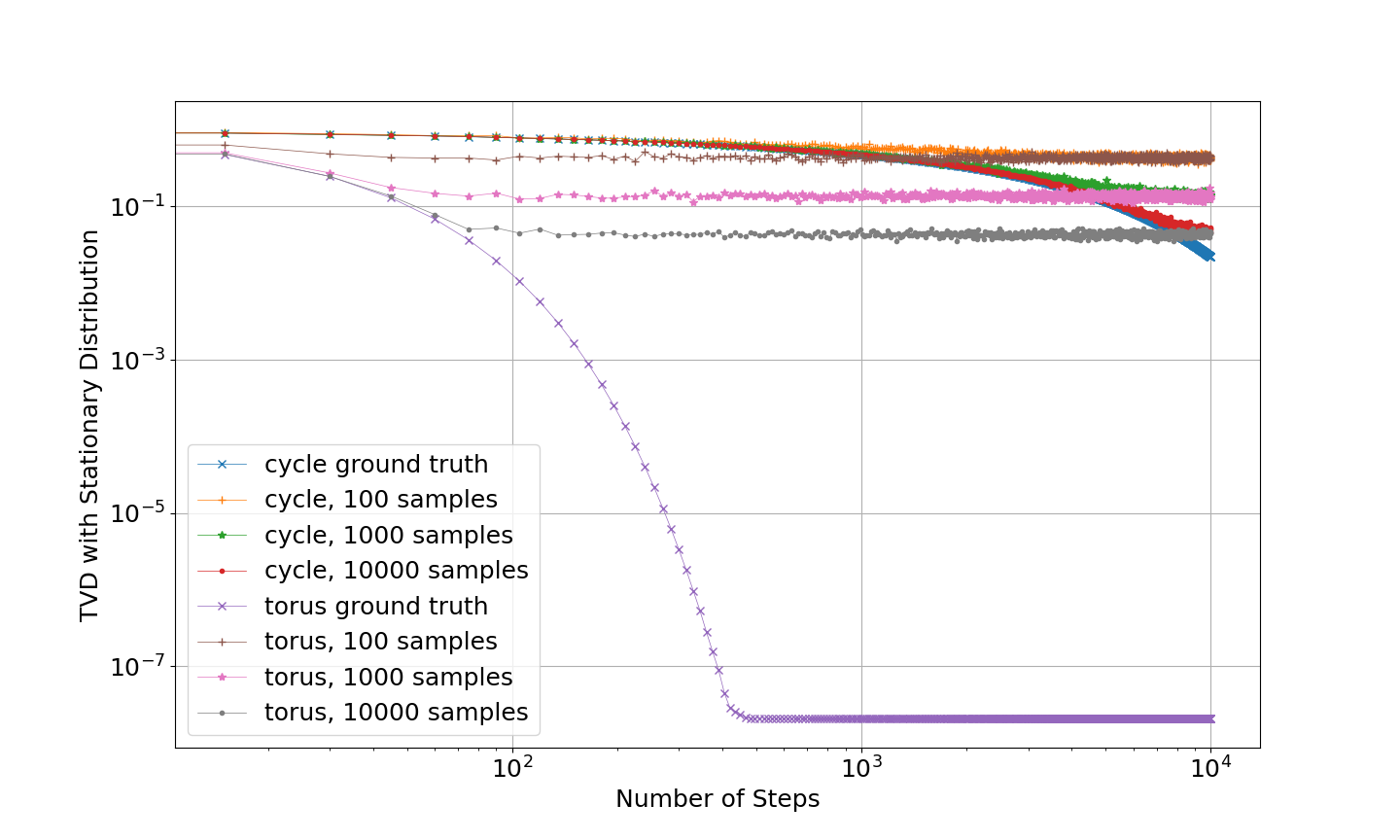}
    \caption{Total variation distance (TVD) with stationary distribution at steps multiples of 15. Ground-truth curves show the TVD between the random walk vertex probability distribution $\pi(t)$ and the stationary distribution $\pi^{*}$, which is uniform distribution for both graphs. Other curves show the TVD between $\mathcal{V}_t^{k}$ and $\pi^{*}$ for $k \in \{10^{2}, 10^{3}, 10^{5}\}$.}
    \label{fig:convergence}
\end{figure*}

Empirical distributions $\mathcal{V}^{k}$ are simulated for a cycle and a torus graph with $|V| = 121$ vertices each, using a discrete delta function for the initial condition $\pi(0)$, and for various values of $k$ and $t$. The number of vertices is selected to be odd in order to ensure that the convergence conditions for the cycle graph are satisfied. Quantum walk operators are obtained by using $\pi(0)$, $M_C$, and $M_T$ in Theorem \ref{th:random_walk_unitarity}.
Since convergence conditions are met, the vertex distribution sequences of the quantum walks approach the uniform distribution $\mu^{*}(v, t) = 1 / 111$ as $t$ approaches the mixing time of the corresponding random walks~\cite{levin2017markov}. Mixing times for uniform random walks on the torus and cycle graphs are known to be $O(|V|)$ and $O(|V|^{2})$, respectively~\cite{levin2017markov}.
Moreover, this initial analysis assesses the convergent behavior of the quantum walks considered through the \textit{Total Variation Distance} (TVD) $D_j$ between the point-estimate quantum walk probability $\hat\mu_{k}$ and the random walk stationary distribution $\mu^{*}$, which assumes the form
\begin{align}
    D_{t'}(\hat{\mu}_k, \mu^{*}) = \frac{1}{2} \sum_{v \in V} \abs{\hat\mu_{k}(v, t') - \frac{1}{|V|}}.
\end{align}
for $t' \in \{0, \ldots, t\}$. Results are reported in FIG.\ref{fig:convergence}, where quantum walks for both the cycle and torus are simulated until time $t = |V|^2$. The TVD of point estimates obtained from simulation are shown, together with ground truth values obtained by numerically evaluating the wavefunction of the quantum walks through Equation \eqref{eq:qwalk} and computing measurement probabilities with the POVMs shown in Equation \eqref{eq:povms} for times $t' \in \{0,\ldots, t\}$. As expected, increasing the number of samples reduces the TVD for all time instants for both graphs. Furthermore, the TVD for the torus approaches zero faster than that of the cycle, which is expected considering their mixing times.

Results in FIG.\ref{fig:convergence} highlight that the variance of the estimator for the vertex distribution increases as the systems approach the stationary state. This phenomena can be explained as follows. The initial distribution for both walks is a discrete delta distribution, which has zero variance. In contrast, their stationary distribution is uniform, which has maximum variance. Thus, the variance of $\pi(t)$ increases as $t$ moves from zero to the mixing time, reaching the maximum possible value at the mixing time itself. Since the walk on the torus approaches the stationary distribution an order of magnitude faster than the walk on the cycle, the variance of $\pi(t)$ on the torus grows faster with $t$ initially. This fact is visible in FIG.\ref{fig:convergence} by noting that the point-estimate curves for the torus start diverging from the ground truth much earlier than the respective curves for the cycle. The time where this divergence occurs indicates the moment where the TVD between the empirical distributions and the stationary distribution becomes dominated by estimation error, rather than the actual difference between $\pi(t)$ and the uniform distribution. Indeed, increasing time beyond this value will not reduce the TVD of the point-estimates. 

\begin{figure*}
    \centering
    \includegraphics[scale=0.45]{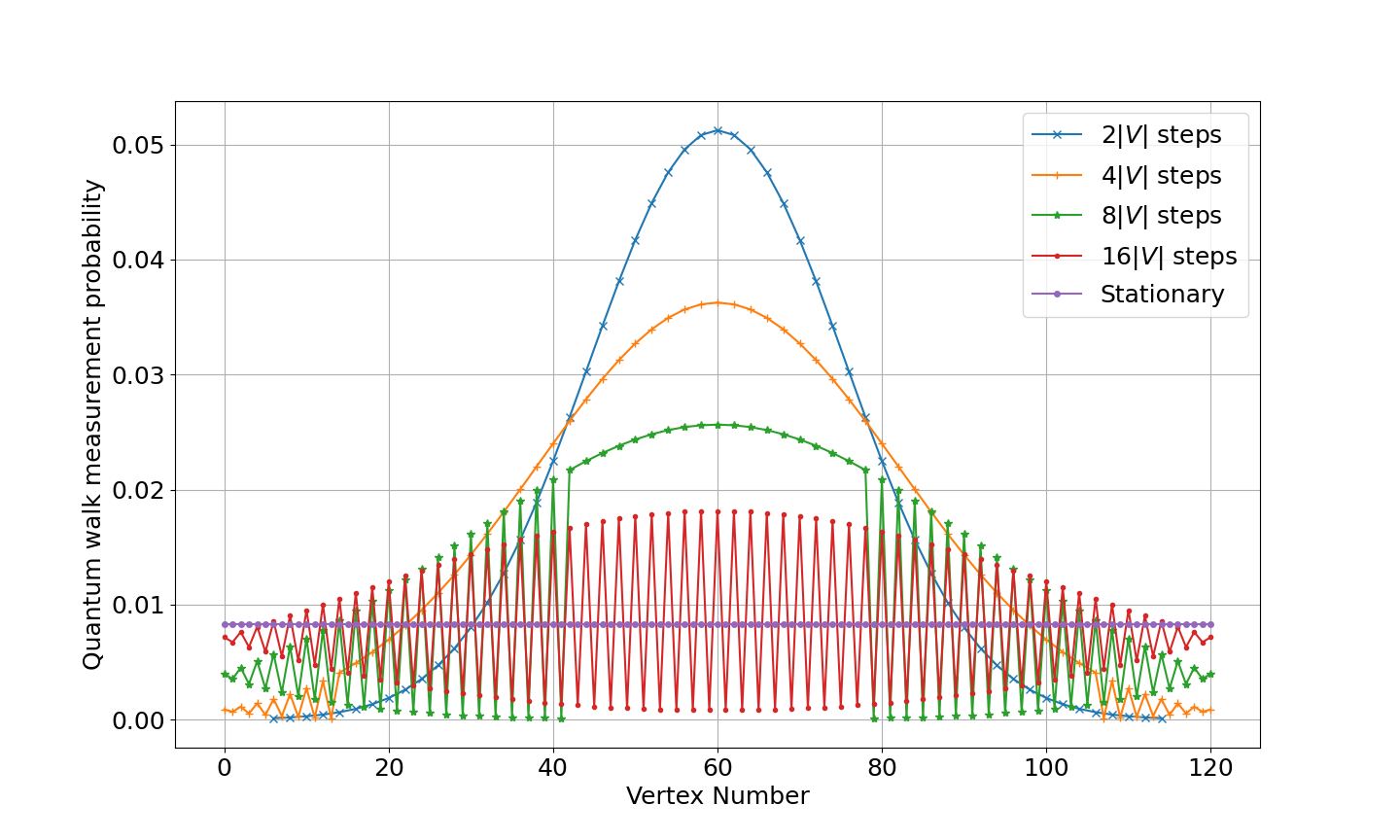}
    \caption{Probability distribution $\mu(\cdot,t)$ for cycle and torus at times $t \in \{2 |V|, 4 |V|, 8 |V|, 16 |V|\}$ and the stationary distribution. The initial condition $\pi(0)$ used is a delta function at vertex labelled 121. For clarity, we omit $\mu(\cdot, t)$ for vertices $v$ with $\mu(v, t) \leq 10^{-4}$.}
    \label{fig:convergence_fixedsamples}
\end{figure*}

In order to illustrate convergence of the quantum walk, FIG.~\ref{fig:convergence_fixedsamples} shows the probability distribution $\mu(\cdot, t)$, Equation~\eqref{eq:v_prob}, for the quantum walk that mimics the uniform random walk on the cycle at times $t \in \{2|V|, 4|V|, 8|V|, 16|V|\}$ along with its stationary distribution. The initial condition $\pi(0)$ considered is a delta function peaking at vertex $60$. The cycle is used for visual purposes: the labelling of the vertices on the cycle allows for a meaningful 2D-representation of vertex probabilities.
For clarity, vertex probability $\mu(v, t)$ smaller than $10^{-4}$ are omitted for all curves, but only odd numbered vertices have such a small probability (since the observation time is even). Note that as time increases, the probability associated with odd numbered vertices increases at even time steps. At time $2|V|$, no odd vertex meets this quota, while at time $16|V|$ all odd vertices have probability above $10^{-4}$. The curves show the shape of $\mu(\cdot, t)$ and the convergence of the vertex distribution as $t$ increases to the steady state distribution, a behavior well-known for the random walk that is now also observed by the corresponding quantum walk.


\section{Discussion} \label{sec:conclusion}

This work provided a methodology to build a time-dependent discrete time coined quantum walk that has a vertex probability distribution over time that is identical to that of any classic random walk on the same graph. In a nutshell, given a finite arbitrary graph, a time-dependent quantum walk can exactly replicate the vertex distribution of a random walk over time. Interestingly, this implies that the vertex probability distributions of time-dependent quantum walks can converge to the steady state distribution of random walks. 

Recent prior work has shown that a time-dependent biased random walk can exhibit the same vertex distribution over time as any discrete time coined quantum walk~\cite{andrade2020equivalence}, a result in the opposite direction. Both results require a time-dependent process (quantum walk or random walk) in order to mimic the evolution of a time homogeneous process (random walk or quantum walk, respectively). In general, such equivalence is not possible without a time-dependent model. Indeed, time-dependence significantly increases the expressiveness of such processes, for both random and quantum walks, allowing one to mimic the other. Considering time-dependent processes, quantum and random walks are two equivalent models to represent the evolution of vertex probability distribution on graphs. 

Last, while the proposed methodology to build time-dependent quantum walks can be applied to any random walk on any finite graph, the solution it generates is not unique: there are other time-dependent quantum walks that can mimic the evolution of random walks. In fact, there are other methodologies to consistently build time-dependent quantum walks capable of replicating the time evolution of random walks in a unitary way. Clearly, different methodologies have different implications (e.g., in terms of quantum circuit complexity) and determining the most efficient methodology is theme of future work.

\bibliography{references}
\bibliographystyle{unsrt}

\end{document}